\documentclass[conference]{IEEEtran}
\IEEEoverridecommandlockouts
\usepackage{cite}
\usepackage{amsmath,amssymb,amsfonts}
\usepackage{amsthm}
\usepackage{algorithmic}
\usepackage{graphicx}
\usepackage{textcomp}
\usepackage{xcolor}
\usepackage{graphicx}
\usepackage[normalem]{ulem}

\usepackage{booktabs}
\usepackage{makecell}
\usepackage{multirow}

\usepackage{amsmath,amsfonts,bm}









\def\eqref#1{equation~\ref{#1}}









\def\1{\bm{1}}






\def\rmX{{\mathbf{X}}}
\def\rmY{{\mathbf{Y}}}
\def\rmZ{{\mathbf{Z}}}





\DeclareMathAlphabet{\mathsfit}{\encodingdefault}{\sfdefault}{m}{sl}
\SetMathAlphabet{\mathsfit}{bold}{\encodingdefault}{\sfdefault}{bx}{n}











\newcommand{\KL}{D_{\mathrm{KL}}}



\usepackage{algorithm}
\usepackage{algorithmic}
\usepackage{amsmath}
\usepackage{amssymb}
\usepackage{amsfonts}
\usepackage{color}

\usepackage{url}
\usepackage{graphicx}

\newtheorem{proposition}{Proposition}
\newcommand{\indep}{\perp\!\!\!\perp}
\usepackage{CJKutf8}
\usepackage{wrapfig}
\usepackage{colortbl}
\definecolor{mygray}{gray}{.8}
\definecolor{mylightgray}{gray}{.9}
\def\BibTeX{{\rm B\kern-.05em{\sc i\kern-.025em b}\kern-.08em
    T\kern-.1667em\lower.7ex\hbox{E}\kern-.125emX}}
\begin{document}

\title{Mind the Gap: Promoting Missing Modality Brain Tumor Segmentation with Alignment\\
\thanks{The work was partially supported by the following: National Natural Science Foundation of China under No.92370119, No. 62206225, and No. 62376113; 
Jiangsu Science and Technology Program (Natural Science Foundation of Jiangsu Province) under No. BE2020006-4;
Natural Science Foundation of the Jiangsu Higher Education Institutions of China under No. 22KJB520039;
XJTLU Research Development Funding 20-02-60. 
Computational resources used in this research are provided by the School of Robotics, XJTLU Entrepreneur College (Taicang), Xi'an Jiaotong-Liverpool University.}
}

\author{\IEEEauthorblockN{Tianyi Liu}
\IEEEauthorblockA{\textit{School of Robotics} \\
\textit{Xi’an Jiaotong-Liverpool University}\\
Suzhou, China\\
Tianyi.Liu2203@student.xjtlu.edu.cn}
\and
\IEEEauthorblockN{Zhaorui Tan}
\IEEEauthorblockA{\textit{School of Advanced Technology}\\
\textit{Xi'an Jiaotong-Liverpool University}\\
Suzhou, China \\
Zhaorui.Tan21@student.xjtlu.edu.cn}
\and
\IEEEauthorblockN{Haochuan Jiang}
\IEEEauthorblockA{\textit{School of Robotics} \\
\textit{Xi’an Jiaotong-Liverpool University}\\
Suzhou, China \\
h.jiang@xjtlu.edu.cn}

\and
    \IEEEauthorblockN{Xi Yang}
\IEEEauthorblockA{\textit{School of Advanced Technology} \\
\textit{Xi’an Jiaotong-Liverpool University}\\
Suzhou, China \\
xi.yang01@xjtlu.edu.cn}
\and
\IEEEauthorblockN{Kaizhu Huang}
\IEEEauthorblockA{\textit{Data Science Research Center}\\
\textit{Duke Kunshan University}\\
Suzhou, China \\
kaizhu.huang@dukekunshan.edu.cn	}
}
\maketitle

\begin{abstract}
Brain tumor segmentation is often based on multiple magnetic resonance imaging (MRI). However, in clinical practice, certain modalities of MRI may be missing, which presents an even more difficult scenario. To cope with this challenge, knowledge distillation has emerged as one promising strategy. However, recent efforts typically overlook the modality gaps and thus fail to learn invariant feature representations across different modalities. Such drawback consequently leads to limited performance for both teachers and students. To ameliorate these problems, in this paper, we propose a novel paradigm that aligns latent features of involved modalities to a well-defined distribution anchor. As a major contribution, we prove that our novel training paradigm ensures a tight evidence lower bound, thus theoretically certifying its effectiveness. Extensive experiments on different backbones validate that the proposed paradigm can enable invariant feature representations and produce a teacher with narrowed modality gaps. This further offers superior guidance for missing modality students, achieving an average improvement of $1.75$ on dice score.
\end{abstract}

\begin{IEEEkeywords}
Alignment, Brain Tumor Segmentation, Knowledge Distillation, Missing Modality
\end{IEEEkeywords}

\section{Introduction}
\label{sec:intro}
Malignant brain tumors severely threaten people's lives. Accurate brain tumor segmentation is crucial for treatment planning~\cite{chen2021learning}.
Multiple Magnetic Resonance Imaging (MRI), such as Fluid Attenuation Inversion Recovery (Flair), contrast-enhanced T1-weighted (T1ce), T1-weighted (T1) and T2-weighted (T2), are common tools to segment brain tumors~\cite{zhao2022modality}.
Since different modalities complement each other in understanding physical structure and physiopathology, combining them may naturally improve tumor segmentation~\cite{lindig2018evaluation,menze2014multimodal,patil2013medical,maier2017isles}. 
However, due to difficulties such as data corruption and scanning protocol variations, in real clinical practice, 
certain modalities may often be missing~\cite{chen2023query,qiu2023scratch,qiu2023modal,liu2021incomplete}.
Therefore, designing a generalized multi-modal approach to overcome difficulties brought by missing modalities is critical for practical clinical applications.

\begin{figure*}[t]
  \centering
  \includegraphics[width=\linewidth]{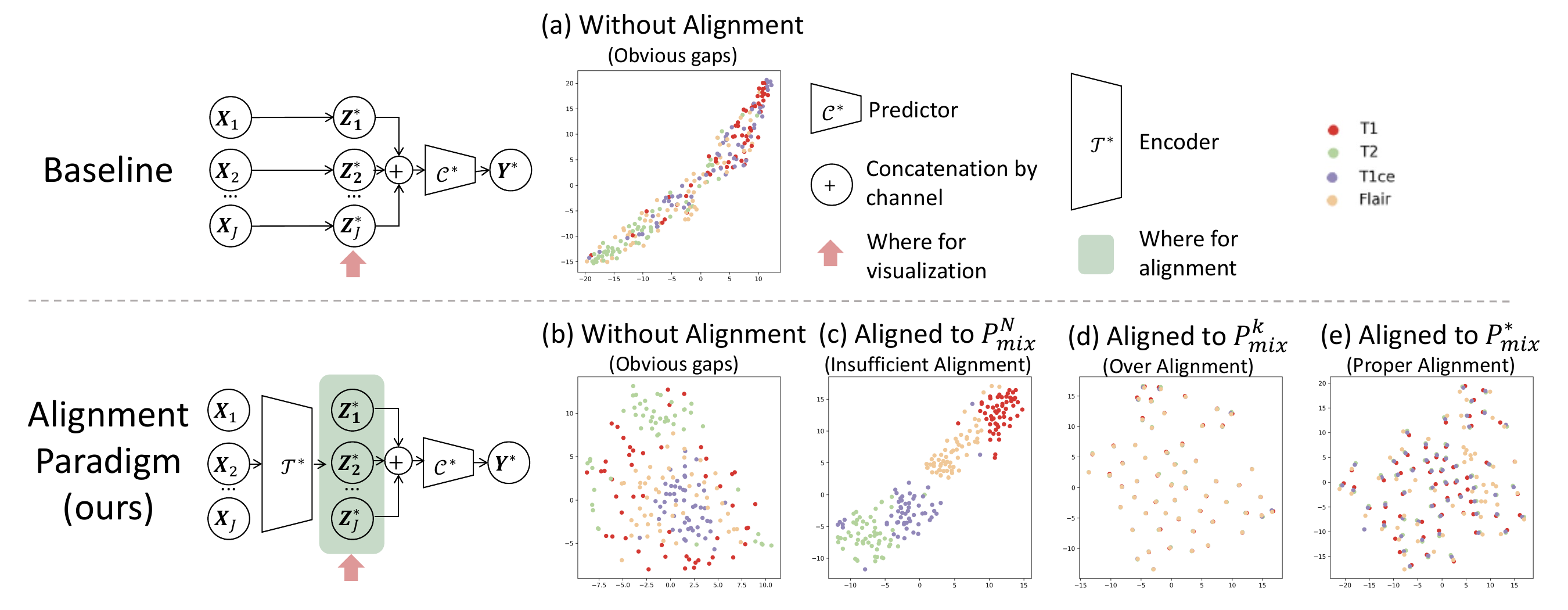}
    \caption{
    Comparison results of T-SNE maps for \emph{teachers}' latent features that are trained using different paradigms, whose structures are exhibited as diagrams.
    Data samples from each modality are assigned with a distinct color.
    For our paradigm, the latent space of each model is aligned to different empirical forms of $P_{mix}$: $P_{mix}^{N}$, $P_{mix}^{k}$ and $P_{mix}^*$ which will be discussed in Sec.~\ref{sec:align_details}.    
   }
   \label{fig:dist_vis}
\end{figure*}


To address this challenge, knowledge distillation (KD) has emerged as one promising solution. Recent efforts in KD~\cite{azad2022smu,chen2021learning,hu2020knowledge,wang2023prototype} initially train a \emph{teacher} with complete modalities that will then be used to supervise \emph{students} to access missing modalities.
To distill knowledge from \emph{teachers}, KD-Net~\cite{hu2020knowledge} employs the Kullback-Leibler (KL) loss to minimize the latent space divergence between \emph{teachers} and \emph{students}; 
PMKL~\cite{chen2021learning} is later designed to enhance KD-Net by incorporating contrastive loss.
Besides, ProtoKD~\cite{wang2023prototype} engages a prototype knowledge distillation loss to encourage simultaneous intra-class concentration and inter-class divergence. Moreover, Style matching U-Net addresses this problem by disentangling content and style components in the latent space~\cite{azad2022smu}.

While the above-mentioned wisdom enhances the segmentation capabilities of \emph{students}, their effectiveness is limited by their \emph{teachers} which remain sub-optimal and insufficiently explored.
Concretely, in these methods, \emph{teachers} simply treat different modalities as distinct channels and typically ignore the modality gaps. However, given that these MRI are captured by different imaging principles,  modality gaps unfortunately exist as always (see Figure~\ref{fig:dist_vis} (a)). As such, the \emph{teachers} may fail to learn invariant features, which further prevents the model from learning shared representations across different modalities and consequently degrades the prediction performance.

Illuminated from alignment approaches in narrowing domain gaps in classification tasks~\cite{ganin2016domain,hu2020domain,li2018deep}, in this paper, we investigate whether alignment can also be used to reduce modality gaps for brain tumor segmentation tasks.
To this end, we propose a novel \textbf{alignment paradigm} where~\emph{teachers} can indeed provide enhanced guidance to \emph{student}
by latent space alignment.
Specifically, latent features of modalities are initially placed in the same space by employing an encoder. 
Inspired by VAE~\cite{tomczak2018vae} and HeMIS~\cite{havaei2016hemis},  latent features of modalities are aligned to a pre-defined latent distribution as the anchor (termed as $P_{mix}$).
It is worth noting that based on a series of theoretical analyses, we validate several possible empirical forms for
$P_{mix}$. 
By analyzing the empirical evidence and the T-SNE~\cite{vandermaaten08a} visualizations displayed in Figure~\ref{fig:dist_vis}, 
we reveal that the best $P_{mix}$ is 
obtained by the weighted combination of each modality. 
As shown in Figures~\ref{fig:dist_vis} (c) and (d), other options of $P_{mix}$'s yield inferior results.
Furthermore, with the best $P_{mix}$, the proposed alignment paradigm fosters the learning of modality-invariant features by reducing modality gaps, producing a significant improvement on different modalities and different backbones.
The major contributions of the paper are summarized as follows:
\begin{itemize}
    \item {We invent a novel alignment paradigm including a latent space distribution $P_{mix}$ as the aligning anchor to learn cross-modality invariance.
    }
    \item We provide theoretical support for the proposed alignment, showing that individually aligning each modality to the best $P_{mix}$ certifies tighter Evidence Lower Bound than mapping all modalities as a whole to the $P_{mix}$.
    \item  With extensive experiments, we verify the superiority of the proposed paradigm in promoting the brain tumor segmentation performance of both the \emph{teacher} and the \emph{students} in the latest state-of-the-art backbones.
\end{itemize}

\section{Theoretical Motivations}


\textbf{Notations.} Considering $J$ modalities of medical images with paired observations and targets $\{\rmX_j\}_{j=1}^J$ and $\rmY$. 
Note for medical modalities,  $\rmY$ remains static for all modalities. 
For the \emph{teacher} of medical segmentation with missing modalities, the encoders are denoted as 
$\mathcal{T}: \mathcal{T}(\rmX_j) \to \rmZ_j^*$ where $\rmZ_j^*$ represents the produced latent features. 
Simultaneously, a predictor $\mathcal{C}$ that predicts segmentation masks from $\{\rmZ_j^*\}_{j=1}^J$ as $\mathcal{C}^*:\mathcal{C}^*(\{\rmZ_j^*\}_{j=1}^J) \to \rmY$. 
Correspondingly, we denote the possible downstream model for the $j^{th}$ target modality as $\mathcal{S}_j:  \mathcal{S}_j(\rmX_{j}) \to \rmZ_{j}$ of each modalities with their predictor $\mathcal{C}_{j}:\mathcal{C}(\rmZ_{j}) \to \rmY$.
Let $P(\cdot)$, $\KL(\cdot\Vert\cdot)$, $H_{c}(\cdot,\cdot)$, $I(\cdot;\cdot)$ denote the probability of a random variable from the distribution, 
KL divergence, cross-entropy, and mutual information, respectively.

\textbf{Previous methods.} 
The original objective used in ~\cite{hu2020knowledge,chen2021learning,wang2023prototype,azad2022smu} of training \emph{teacher}  can be treated as using a fixed linear 
$\mathcal{T}$. Thus its objective 
is:
\begin{align}
\label{eq:t_obj_ori}
    & \max_{\mathcal{C}^*} \!\sum\nolimits_{j=1}^J \!\mathbb{E}_{\rmZ_j^*\sim P(\rmZ_j^*)}[\ln P(\rmY \! \mid \! \mathcal{C}^*(\rmZ_j^*))].
\end{align}
In the scope of information theory, it can be altered as:
\begin{align}       
    &\min_{\mathcal{C}^*} \sum\nolimits_{j=1}^J H_c(P( \mathcal{C}^*(\rmZ_j^*)), P(\rmY)).  
\end{align} 
Meanwhile, for $\mathcal{S}_j$ that leverages knowledge from $\mathcal{T}$, its objective is:
\begin{equation}
\label{eq::ss_obj}
    \max_{\mathcal{S}_j,\mathcal{C}_j}
   \mathbb{E}_{\rmZ_j\sim P(\rmZ_j)}[\ln P(\rmY \mid \mathcal{C}(\rmZ_j))]  
   - \KL(P(\rmZ_j) \Vert P(\rmZ_j^*)).   
\end{equation}
In practice, the modality which $\mathcal{S}_j$ aims to is unknown for $\mathcal{T}$. Thus, we expect the sum of risks for all possible \emph{students} (shown in Eq.~(\ref{eq::ss_obj})) to be minimized:
\begin{align}
   \max_{\mathcal{S}_j,\mathcal{C}_j}  \sum_{j=1}^J [
    \mathbb{E}_{\rmZ_j\sim P(\rmZ_j)}&\![\ln\! P(\rmY \! \mid \!\mathcal{C}(\rmZ_j))] \!\!- \!\!\KL(P(\rmZ_j) \Vert P(\rmZ_j^*)) ] \nonumber \\
    =\min_{\{\mathcal{S}_j\}_{j=1}^J, 
    \{\mathcal{C}_j\}_{j=1}^J} &\sum\nolimits_{j=1}^J [\KL(P(\rmZ_j) \Vert P(\rmZ_j^*)) 
    \\
    & +
     H_c(P( \mathcal{C}_j(\rmZ_j)), P(\rmY))].
    \nonumber
\end{align}

Figure~\ref{fig:dist_vis}(a) illustrates that multi-modal medical images often exhibit incomplete space coverage and significant modality gaps, which can degrade the performance of \emph{teachers}. In contrast, our experiments validate that the aligned latent space produced by our approach improves the generalization ability of the medical \emph{teacher}, then benefiting downstream \emph{students}.

\textbf{Our alignment paradigm}
To alleviate the modality gaps,
our alignment paradigm aligns all modal latent features to a pre-defined distribution $P^{k}_{mix}$, 
as shown in Figure~\ref{fig:dist_vis}(b) with $P^{k}_{mix}$ and $P^*_{mix}$ column. 
This part provides more details about our approach.
Different from previous studies~\cite{azad2022smu,chen2021learning,hu2020knowledge,wang2023prototype},
we further define a continuous distribution $P_{mix}$ as the targeted latent space distribution of $\rmZ^*$
for the \emph{teacher}.

\begin{proposition}
\label{prop:P_mix}
For training a multi-modal \emph{teacher}
model, it is assumed that $\rmZ_i \indep \rmZ_j$ where $i,j \in \{1,..., J\}, i \neq j$. In this scenario, there exists a probability distribution $P_{mix}$ that can be used as an anchor distribution to align the latent variables $\rmZ^*$, while preserving sufficient information for accurate prediction of the segmentation labels $\rmY$.
\end{proposition}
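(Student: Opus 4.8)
The plan is to prove the proposition constructively: first exhibit an explicit candidate for $P_{mix}$, verify that it is a genuine probability distribution (settling the existence half), and then show that aligning each modality's latent $\rmZ_j^*$ to it preserves the information needed to predict $\rmY$ (the sufficiency half). Following the VAE/HeMIS intuition invoked in the text, I would take $P_{mix}$ to be a weighted combination of the per-modality latent marginals,
\[P_{mix} = \sum_{j=1}^J \pi_j\, P(\rmZ_j^*), \qquad \pi_j \ge 0,\ \sum_{j=1}^J \pi_j = 1,\]
which is automatically a nonnegative measure integrating to one, hence a valid distribution for any choice of weights. This immediately gives existence; the weights $\pi_j$ remain free parameters to be pinned down later, and the ``best'' $P_{mix}$ discussed in the introduction corresponds to one particular optimal choice (with the degenerate and uniform choices recovering the inferior empirical forms).

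For the sufficiency half, I would work in the information-theoretic formulation already used above for the teacher objective. I would write the predictive information carried by the aligned representation as $I(\rmZ^*;\rmY)$ and compare it with the information available in the raw observations, $I(\{\rmX_j\}_{j=1}^J;\rmY)$. Since the encoder $\mathcal{T}$ is a (possibly stochastic) map $\rmX_j \mapsto \rmZ_j^*$, the data-processing inequality yields $I(\rmZ^*;\rmY) \le I(\{\rmX_j\}_{j=1}^J;\rmY)$, and ``preserving sufficient information'' is precisely the statement that equality holds. The key leverage is the hypothesis $\rmZ_i \indep \rmZ_j$ combined with the fact that $\rmY$ is static across modalities: because every modality must explain the same $\rmY$, the label-relevant content is common to all $\rmZ_j^*$, so a single shared anchor can carry it without loss. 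Concretely, I would choose the alignment maps so that each conditional $P(\rmZ_j^* \mid \rmY)$ collapses onto a common form and then check that the posterior $P(\rmY \mid \rmZ^*)$ induced through $P_{mix}$ reproduces the Bayes-optimal predictor.

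To connect this to the ELBO language the paper emphasizes, I would finally package the two properties into a variational lower bound on the evidence $\ln P(\rmY)$: the prediction term $\E_{\rmZ^* \sim P(\rmZ^*)}[\ln P(\rmY \mid \mathcal{C}^*(\rmZ^*))]$ minus the alignment penalty $\KL(P(\rmZ^*) \Vert P_{mix})$ is a valid lower bound, and the existence of a $P_{mix}$ that makes the penalty achievable while keeping the prediction term at its maximum is exactly the assertion of the proposition. This framing also sets up the sharper comparison claimed as the paper's second contribution.

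The step I expect to be the main obstacle is the sufficiency argument under the alignment constraint. There is a genuine tension here: forcing every $P(\rmZ_j^*)$ to match one fixed anchor suppresses modality-specific structure (which is what produces invariance), yet the anchor must still retain all of $I(\rmZ^*;\rmY)$. Reconciling these requires showing that the label-relevant statistic factors through a distribution common to all modalities, i.e.\ that the minimal sufficient statistic for $\rmY$ is shared rather than modality-private. This is where the independence assumption $\rmZ_i \indep \rmZ_j$ and the static nature of $\rmY$ must do the real work, and turning that plausible implication into a rigorous equality in the data-processing inequality is the crux of the proof.
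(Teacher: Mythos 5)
Your existence step is fine as far as it goes: the mixture $P_{mix}=\sum_{j}\pi_j P(\rmZ_j^*)$ is trivially a probability distribution, and it coincides with the paper's eventual choice $P_{mix}^*$. But the sufficiency half --- which you yourself flag as the crux --- is left open, and the route you sketch for it would not close. You set the target as equality in the data-processing inequality $I(\rmZ^*;\rmY)\le I(\{\rmX_j\}_{j=1}^J;\rmY)$, which is both stronger than the proposition requires and generally unattainable: the encoder $\mathcal{T}$ may be lossy independently of any alignment, so sufficiency should be stated relative to the unaligned latents, not the raw observations. Likewise, your plan to force each conditional $P(\rmZ_j^*\mid\rmY)$ to ``collapse onto a common form'' is a strictly stronger constraint than matching the marginals to $P_{mix}$, and you give no argument that it is achievable without destroying label information. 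The missing idea that would make your sufficiency claim rigorous is much simpler: compose each encoder with an invertible transport map pushing $P(\rmZ_j^*)$ onto the anchor; mutual information with $\rmY$ is invariant under bijections, so essentially any continuous anchor preserves predictive information, with no tension to reconcile at all. Your proposal never identifies this, so the acknowledged obstacle remains a genuine gap rather than a deferred technicality.

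The second, more structural divergence is that your argument never uses the hypothesis $\rmZ_i\indep\rmZ_j$ in any derivation --- you only remark that it ``must do the real work.'' In the paper's proof that assumption is the entire engine: it supplies, via the chain rule for mutual information, the superadditivity that yields the single-letterization inequality $\sum_{j=1}^{J} I(P_{mix}(\rmZ_j^*);P(\rmZ_j^*)) \le I(P_{mix}(\rmZ^*);P(\rmZ^*))$, and that inequality is what licenses the subsequent claim that the per-modality sum of prediction-minus-$\KL$ terms is a valid (tighter) evidence lower bound than aligning all modalities jointly. So even if you completed your sufficiency step, you would not have derived the comparison between per-modality and joint alignment that constitutes the proposition's operational content in the paper. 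To be fair, the paper's own proof is itself conditional (it begins ``if $P_{mix}$ preserves sufficient information\dots'') and handles existence only informally through the later claim that $P^k_{mix}$ and $P^*_{mix}$ preserve the proposition through prediction, so your constructive instinct for the existence half is a reasonable improvement in spirit; but as a proof of this proposition, your attempt substitutes an incomplete and mis-targeted sufficiency argument for the one inequality the paper actually proves.
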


\begin{proof}
The modality-independent assumption is derived from the fact that each modality is independent of each other. 
If $P_{mix}$ preserves sufficient information for accurate prediction of the segmentation labels $\rmY$,
based on the joint and marginal mutual information, we have 
\begin{align}
\label{eq:align}
    & \sum\nolimits_{j=1}^{J} \!I(P_{mix}(\rmZ_j^*); P(\rmZ_j^*))  
    \!\le \!I(P_{mix}(\rmZ^*); P(\rmZ^*)).
\end{align}
Eq.~(\ref{eq:align}) shows that individually mapping each modality $\rmZ_j^*$ to $P_{mix}$ is a lower bound of mapping all modalities together to $P_{mix}$.
\end{proof}
Proposition~\ref{prop:P_mix} is \textbf{single-letterization} that simplifies the optimization problem over a large-dimensional (i.e., multi-letter) problem. 
Therefore, we individually align the representations of each modality to the anchor $P_{mix}$, rather than the whole distribution of all representations from all modalities:
\begin{equation}
\begin{split}
\label{eq:why_pmix}
    &\sum_{j=1}^{J} 
    \mathbb{E}_{\rmZ_j^*\sim P(\rmZ_j^*)}
    [\ln P(\rmY \! \mid \!\mathcal{C}^*(\rmZ_j^*)) -  \KL(P( \rmZ_j^*) \Vert P_{mix}) ]  \\
    \leq & \mathbb{E}_{\rmZ^*\sim P(\rmZ^*)}
    [\ln P(\rmY \! \mid \!\mathcal{C}^*(\rmZ^*)) \! \! - \!\! \KL(P( \rmZ^*) \Vert P_{mix}) ]. 
\end{split} 
\end{equation}

The former is termed Evidence Lower Bound (ELBO)~\cite{tomczak2018vae}, which is tighter than the latter. Thus,
minimizing the gap between all modalities and $P_{mix}$, the alternative objective for \emph{teacher} is further derived as:
\begin{equation}
 \label{eq:t_obj}
   \min
   \sum_{j=1}^J [\KL(P( \rmZ_j^*) \Vert P_{mix}) \!
    + \! H_c( P( \mathcal{C}^*(\rmZ_j^*)); P(\rmY))]. 
\end{equation}
As shown in Eq.~(\ref{eq:t_obj}),
the essential point is to find a feasible $P_{mix}$ that anchors all latent features in the space while {preserving the prediction ability from the latent features to targets for all \emph{students}}. 

\textbf{Possible approximations of $P_{mix}$.}
\label{approximations}
It is intractable to obtain the ideal $P_{mix}$ in practice. 
Therefore, different pre-defined $P_{mix}$ approximations are made.   
Similar to VAE, a possible assumption is that $P_{mix}$ is a fixed distribution such as standard normal distribution: $P_{mix}^{N}\triangleq \mathcal{N}(0, 1)$. 
However, $P_{mix}^{N}$ may not certify Proposition~\ref{prop:P_mix}, may yielding sub-optimal results.
Thus, we propose $P_{mix}$ which is one of $\{P(\rmZ^*_j)\}_{j=1}^J$ (i.e., $P_{mix}^{k} \triangleq P(\rmZ^*_{j=k})$ where $k\in \{1,...,J\}$) or which is a weighted mixture of them (i.e., $P_{mix}^* \triangleq \sum\nolimits_{j=1}^J w_j P(\rmZ_j)$) where $w_j$ is the associated weight of each modality.  $P_{mix}^{k}$ and $P_{mix}^*$ naturally preserve Proposition~\ref{prop:P_mix} through the prediction.  
Testing the effectiveness of the different $P_{mix}$, we find that a tractable approximation $P_{mix}^*$ will produce approximately the best results.

\section{Methodology}
\label{methodology}



\subsection{Proposed alignment paradigm}

The overall alignment paradigm consists of training \emph{teachers} and \emph{students}. 
The structure diagram of the \emph{teacher} is exhibited in the bottom left part of
Figure~\ref{fig:dist_vis}. 
For training the \emph{teacher}, each modality is initially encoded into the same latent space and then individually aligned to the pre-defined anchor $P_{mix}$ as shown in Eq.~(\ref{eq:align}). 
Finally, following the previous baselines,
a 3D U-Net is used as the predictor~($\mathcal{C^*}$) for segmentation based on the aligned latent features.

Then the enhanced prior knowledge obtained by the \emph{teachers} are leveraged to students by implanting to different backbones (see Sec.~\ref{Comparison}).
We train \emph{students} by distilling knowledge from the trained \emph{teachers} in the missing modality scenario
~\cite{azad2022smu,chen2021learning,hu2020knowledge,wang2023prototype}. The loss to be optimized is:
$
    L^S=  L_{seg}^\mathcal{
    S} +  L^{\mathcal{
    T}},
$
where $L_{seg}^\mathcal{S}$ represents the segmentation loss guided by ground truth labels, and $L^\mathcal{T}$ denotes the loss that receives supervision from the \emph{teacher} used in the previous works.


\subsection{Alignment with various $P_{mix}$}
\label{sec:align_details}
This alignment towards $P_{mix}$ standardizes data distributions from diverse sources into a consistent distribution, facilitating the learning of unique features across modalities.
Specifically, we provide details of the alignment to various empirical forms of $P_{mix}$ (which are denoted as $P^{k}_{mix}$, $P^{*}_{mix}$, and $P^{N}_{mix}$).

\textbf{Aligning to $P^{k}_{mix}$.}
\label{sec:pmix_align}
The $k^{th}$ modality that has the most feature invariant representation is a reasonable choice for $P_{mix}$ (see more details in Sec.~\ref{preperanchor}), i.e., $P_{mix}^{k} \triangleq P(\rmZ^*_{j=k})$. 
The alignment of other modalities to the chosen optimal modality is facilitated through Mean Squared Error (MSE). 
Here we minimize:
$
    \mathbb{E}[||\rmZ^*_j - \rmZ^*_k||^2]
$
for each paired sample. 

\textbf{Aligning to $P^{*}_{mix}$.}
In the quest to derive a more conducive latent space for integrating all modalities, we have advanced an innovative methodology termed Adaptive Alignment. This approach will transcend the basic alignment method that confines the latent space to a specific modality.
Adaptive Alignment operates under the presumption that an optimal latent space for a prior modality can serve as a foundational anchor. 
Then we have:
$
       \sum\nolimits_{j=1}^J w_j ||\rmZ^*_j - \rmZ^*_k||^2,
$
where $w_j$ are learnable weights.
Note that the \emph{teacher} is not frozen during training, with the purpose to enable the \emph{teacher} find the adaptive latent space.

\begin{table*}[t]
\scriptsize
\centering
\caption{Comparison of segmentation results in each class and average dice scores with different anchor $P_{mix}$s. For each modality, results are compared to a model that only uses unimodal and the original backbone models. Imp.: Improvement of $P^*_{mix}$ compare to original method. Average Imp. is the average improvement of all the Imp. column. The best average results of each backbone if all settings are highlighted.}
\label{tab:enhancedBest}
\resizebox{\linewidth}{!}{%
\begin{tabular}{c|cccc|cccc|cccc|cccc|cccc|c|c}
\toprule
\multicolumn{1}{c|}{\multirow{2}{*}{Method}} & \multicolumn{4}{c|}{Original Method} & \multicolumn{4}{c|}{without Alignment} & \multicolumn{4}{c|}{ with $P^N_{mix}$} & \multicolumn{4}{c|}{ with $P^k_{mix}$}         & \multicolumn{5}{c|}{ with $P^*_{mix}$ \textbf{Average Imp.: 1.75}} & \multicolumn{1}{c}{{}}\\
\cmidrule{2-23} 
\multicolumn{1}{c|}{}                        & WT      & TC      & EC     &  Avg.  & WT      & TC      & EC     & Avg.    & WT    & TC    & EC    & Avg.     &WT& TC    & EC    & Avg.          & WT     & TC     & EC    & Avg. &\textbf{Imp.} &Modality        \\
\midrule
\midrule
\rowcolor{mylightgray}Unimodal                                    & 72.96   & 65.59   & 37.77  & 58.77  &       &       &       &       &       &        &        &       &       &  & & & & & & &   \\
KD-Net                                         & {79.62}&59.83&33.69&57.72& 72.07	&66.22	&40.13	&59.47 &74.21&	{67.63}	&43.24  &\cellcolor{mygray}\textbf{61.69}  &74.06	&64.21	&41.78	&60.02  &71.49	&65.18	&{43.25}	&59.97  &{{2.25}} & \multirow{2}{*}{T1}  \\
PMKL                                        &73.31 &64.26 &41.37 &58.98&  {75.50}  & 65.98   & 40.09  &60.53   &{75.60}&	65.59	&43.31&	61.50  &75.06	&66.80	&41.43	&61.10  &72.04	&{68.39}	&{47.66}	&\cellcolor{mygray}\textbf{62.70} &{{3.97}}   \\
ProtoKD                                      & 74.46& {67.34}& {{47.41}}& {{63.07}}& 73.64   & 65.05   & {43.04}  & 60.57  &72.95&	65.52	&42.92&	60.47  &{75.60}	&66.95	&43.18	&\cellcolor{mygray}\textbf{61.91}  &73.98	&{67.36}	&42.11	&61.15 & -2.55   \\
SMU-Net & 74.33 & 65.52 & 40.22 & 60.02 & 75.24 & 68.52 & 43.03 & 62.26 & 75.10 & 66.41 & 42.78 & 61.43 & 75.15 & 67.25 & 41.71 & 61.37 & 75.02 & 67.78 & 43.30 &\cellcolor{mygray}\textbf{62.03}&2.01 \\
\midrule
\midrule
\rowcolor{mylightgray}Unimodal                                    & 82.65   & 66.76   & 45.23  & 64.91  &       &       &       &       &       &        &        &       &       &      & & & & & & &\\
KD-Net                                        & {{85.74}} &66.79 &33.63& 62.05&80.50&	66.99&	{48.02}&	65.17  &83.23	&69.64	&43.18	&65.35  &83.22	&70.72	&44.72	&66.22  &84.26&{71.30}&47.04&\cellcolor{mygray}\textbf{67.53}  &  {{5.48}}& \multirow{2}{*}{T2}  \\
PMKL                                      &81.00 &67.92& 47.09 &65.34   &  82.68  &  67.14  &  {44.82} & 64.88  &80.46&	69.06&	48.38&	65.97  &82.47	&69.56	&45.78	&65.94  &{83.77}	&{69.91}	&45.17	&\cellcolor{mygray}\textbf{66.28}   &{{0.94}}   \\
ProtoKD                                     &81.83& {68.29} &{47.35} &{65.82}  &   81.82 &  {70.21}  & 48.78  & \cellcolor{mygray}\textbf{66.94}  &83.82&	69.54&	45.03&	66.13  &{83.18}	&67.96	&{47.71}	&66.28  &83.01	&70.26	&47.29	&{66.85}   & {{1.03}}   \\
SMU-Net & 85.57 & 70.61 & 47.33 & 67.84 & 84.69 & 70.34 & 46.94 & 67.32 & 84.88 & 69.96 & 45.08 & 66.64 & 85.09 & 69.50 & 44.85 & 66.48 & 84.45 & 69.82 & 47.09 & \cellcolor{mygray}\textbf{67.12} & -0.62 \\
\midrule
\midrule
\rowcolor{mylightgray}Unimodal                                    & 71.41   & 73.30   & 76.36  & 73.69  &       &       &       &       &       &        &        &       &       &      & & & & & & &\\
KD-Net                                          & {78.87}&  80.83&  70.52&  76.74& 72.14   & 80.75  & 77.61  & 76.83  &72.49&	79.30&	74.46 &75.42  &76.73	& 81.64	& 75.56	& 77.98   &76.62	&80.15	&81.29	&\cellcolor{mygray}\textbf{79.36}&  {2.62}  & \multirow{2}{*}{T1ce}  \\
PMKL                                        &  70.50 & 76.92 & 75.54 & 74.32&   74.00 &  78.64  & 72.71  &  77.31 &73.89&	80.86&	77.48 &77.41 &77.46	& 80.71	& 75.40	& \cellcolor{mygray}\textbf{77.86}    &75.97	&80.35	&76.44	&77.58 &  {3.26} \\
ProtoKD                                      & 74.67 & {81.48} & {76.01} & {77.39} & 75.16   & 80.47   &  76.74 & 77.45 &76.52&	80.85&	75.73 &77.70  &75.98	& 79.41& 76.99	& 77.46    &74.91	&81.44	&77.39	&\cellcolor{mygray}\textbf{77.91}  & {0.52} \\
SMU-Net & 75.33 & 79.41 & 76.22 & 76.99 & 76.65 & 80.08 & 76.01 & 77.58 & 75.68 & 79.86 & 74.92 & 76.06 & 78.63 & 74.85 & 76.51 & 76.66 & 75.83 & 80.13 & 75.57 & \cellcolor{mygray}\textbf{77.18}&0.09\\
\midrule
\midrule
\rowcolor{mylightgray}Unimodal                                    & 81.91   & 63.57   & 40.74  & 62.07  &       &       &       &       &       &        &        &       &       &      & & & & & & &\\
KD-Net                                         & {88.28} &64.37 &33.39& 62.01&  84.97  &  63.16  & 41.44  & 63.19  &84.84 & 64.67 & 44.15 & 64.56 &  85.46	&66.77	&43.99	&\cellcolor{mygray}\textbf{65.41}  &84.96&	66.58&	42.16&	64.57  &   {2.56} & \multirow{2}{*}{Flair}  \\
PMKL                                       & 84.11& 62.21& 41.35 &62.56 & 84.74   &   67.07 & 43.42  &65.07   &84.09&	66.78&	42.13& 64.33   &83.84	&68.89	&41.41	&64.71    &85.70	& 68.44	& 43.57& 	\cellcolor{mygray}\textbf{65.90} &  {3.34} \\
ProtoKD                                  & 84.64 &{65.56}& {42.30}& {64.17}    & 84.59   & 67.70   & 40.91  &64.39   &   84.62  &   64.32     &  37.76      & 62.23         &84.23& 	67.73& 	41.45& 	64.47   &85.62	&68.71	&41.38	&\cellcolor{mygray}\textbf{65.23} &   {1.06}  \\
SMU-Net & 85.74 & 62.89 & 38.12 & 62.25 & 85.70 & 63.50 & 39.43 & 62.88 & 85.99 & 65.74 & 40.55 & 64.09 & 86.78 & 63.83 & 40.82 & 63.81 & 86.89 & 64.88 & 41.41 & \cellcolor{mygray}\textbf{64.39} & 2.14\\
\bottomrule
\end{tabular}%
}
\end{table*}

\begin{table*}[h]
\tiny
\centering
\caption{Average dice scores when aligning to modality T2 which is not ideal.}
\resizebox{\linewidth}{!}{%
\begin{tabular}{c|ccccc|ccccc|ccccc|ccccc}
\toprule
\multicolumn{1}{c|}{Method} &
\multicolumn{5}{c|}{T1} &
\multicolumn{5}{c|}{T2}    & 
\multicolumn{5}{c|}{T1ce} &
\multicolumn{5}{c}{Flair}
\\
\cline{2-21}
\multicolumn{1}{c|}{}                        & WT      & TC      & EC     & Avg. & \makecell[c]{T1ce \\ Avg.}    & WT    & TC    & EC    & Avg.  & \makecell[c]{T1ce \\ Avg.}  &  WT      & TC     & EC     & Avg. & \makecell[c]{T1ce \\ Avg.}     &  WT      & TC      & EC     & Avg. & \makecell[c]{T1ce \\ Avg.} \\
\midrule                                 
\midrule
\rowcolor{mylightgray}Unimodal                                    & 72.96   & 65.59   & 37.77  & 58.77  &  &82.65   & 66.76   & 45.23  & 64.91   &  &  71.41   & 73.30   & 76.36  & 73.69 &   & 81.91   & 63.57   & 40.74  & 62.07& \\
KD-Net                                          & 72.09 & 65.37 & 40.83 & 59.43 &\textbf{60.02} &80.78 & 67.64 & 45.59 & 64.67 &\textbf{66.22}&73.61 & 78.86 & 78.72 & 77.07 &\textbf{77.98}& 84.31 & 67.59 & 43.17 & 65.02  &\textbf{65.90}\\
PMKL                                        & 71.23 & 61.40 & 40.98 & 57.87 &\textbf{61.10}& 81.47 & 69.08 & 44.60 & 65.05 &\textbf{65.94}&71.92 & 76.22 & 77.71 & 75.29 &\textbf{77.86}& 84.00 & 65.18 & 42.53 & 63.90 &\textbf{64.71}\\
ProtoKD                                       & 72.95 & 67.37 & 40.73 & 60.35 & \textbf{61.91} &83.85 & 68.51 & 45.70 & 66.02&\textbf{66.28} &75.08 & 79.93 & 78.50 & \textbf{77.84} &77.46 & 84.36 &68.16 & 41.48 & 64.46 &\textbf{64.47}\\
SMU-Net&86.31&67.62&40.17&\textbf{64.70}&61.37&72.72&66.48&43.97&61.06&\textbf{66.48}&75.78&79.26&72.55&75.86&\textbf{76.66}&83.99&70.60&41.82&\textbf{65.47}&63.81\\
\bottomrule
\end{tabular}%
}
\label{tab:T2Result}
\end{table*}

\begin{table*}[h]
\tiny
\caption{Comparison of dice scores when different modalities are missing. $\bullet$ is the modality which is not missing, $\circ$ is the modality which is missing. $\Delta$ is the model with our paradigm.}
\centering

\resizebox{\linewidth}{!}{%
\begin{tabular}{c|c|cccc|cccccc|cccc|c|c}
\toprule
\multirow{4}{*}{Type} & Flair& $\circ$& $\circ$& $\circ$& $\bullet$& $\circ$& $\circ$& $\bullet$& $\circ$&$\bullet$&$\bullet$&$\bullet$&$\bullet$&$\bullet$& $\circ$& $\bullet$&\multirow{4}{*}{Avg.}\\
& T1& $\circ$& $\circ$& $\bullet$&$\circ$& $\circ$& $\bullet$& $\bullet$& $\bullet$&$\circ$&$\circ$&$\bullet$&$\bullet$&$\circ$& $\bullet$& $\bullet$&\\
& T1ce& $\circ$& $\bullet$& $\circ$&$\circ$& $\bullet$& $\bullet$& $\circ$& $\circ$&$\circ$&$\bullet$&$\bullet$&$\circ$&$\bullet$& $\bullet$& $\bullet$& \\
& T2 &$\bullet$& $\circ$& $\circ$&$\circ$& $\bullet$& $\circ$& $\circ$& $\bullet$&$\bullet$&$\circ$&$\circ$&$\bullet$&$\bullet$& $\bullet$& $\bullet$&\\
\bottomrule
\multirow{2}{*}{WT}& PMKL&81.00&70.50&73.31&84.11&75.82&66.62&79.85&79.35&83.01&75.67&73.86&84.78&83.19&70.57&85.62&77.82\\
& $\Delta$&83.77&75.97&72.04&85.70&77.98&74.34&83.82&83.00&85.09&83.85&81.86&85.27&86.13&82.15&86.81&\textbf{81.85}\\
\bottomrule
\multirow{2}{*}{TC}&PMKL&67.92&76.92&64.26&62.21&72.46&70.05&51.64&66.66&68.74&66.44&70.12&61.79&75.04&68.89&80.14&68.22\\
&$\Delta$&69.91&80.35&68.39&68.44&75.06&76.04&54.62&67.49&67.96&67.66&76.55&65.49&76.70&78.21&79.22&\textbf{71.47}\\
\bottomrule
\multirow{2}{*}{ET}&PMKL&47.09&75.54&41.37&41.35&70.25&69.31&21.92&47.10&44.13&62.43&68.24&39.90&69.28&65.37&75.01&55.89\\
&$\Delta$&45.17&76.44&47.66&43.57&73.37&77.88&36.78&46.00&45.24&61.90&78.79&44.86&75.47&77.96&77.85&\textbf{60.60}\\
\bottomrule
\end{tabular}%
}
\label{tab:otherMissing}
\end{table*}

\textbf{Aligning to $P^{N}_{mix}$.}
Since $P^k_{mix}$ is the selected form ${P(\rmZ^*_j)}_1^J$ and $P^*_{mix}$ is a weighted mixture of ${P(\rmZ^*_j)}_1^J$'s, 
the relationship between $\rmZ^*_j$ and these $P_{mix}$'s
is tractable, and the feature and its target for alignment are paired. Therefore, we can exploit MSE as the alignment loss. However, $P^N_{mix}$ utilizes the standard Gaussian distribution, not derived from a weighted combination of $\{P(\rmZ^*_j)\}_1^J$, its relationship with $P(Z^*_j)$ is unclear. As such, the definitive alignment target for each feature remains unknown. 
Consequently, we can only use the KL divergence to align the entire distribution of $P(Z^*_j)$ to $P^N_{mix}$.
Similar to VAE, 
given $J$ modalities, our objective is to minimize
$\mathbb{E}[\KL(P(\rmZ^*_j) \Vert P^{N}_{mix} )]$, 
which is equivalent to minimizing: 
$
    \mathbb{E}[2\log {1}/{v(\rmZ^*_j)} + {v(\rmZ^*_j)^2 + ({\Bar{\rmZ}^*_j})^2}/{2} - {1}/{2}],
$ which was proposed in VAE.
Here $v(\rmZ^*_j)$, $\Bar{\rmZ}^*_j$ denotes the variance and mean of ${\rmZ}^*_j$, which are obtained by learnable parameters during the training process.

\section{Experiments}
\label{sec:exp}
\textbf{Data and implementation details.}
The 2018 Brain Tumor Segmentation Challenge (BRATS) dataset ~\cite{bakas2017advancing,menze2014multimodal}, consisting of 285 subjects with four MRI modalities (T1, T1c, T2, and FLAIR), is employed to evaluate the proposed paradigm and other baselines. 
Annotations are given by normal tissue regions and three tumor-related masks, i.e., whole tumor (WT), tumor core (TC), and enhancing core (EC). Image intensities are normalized to $[-1,1]$. 
Each volume is augmented by randomly cropping each training example as $80\times80\times80$~\cite{chen2021learning}, while the dataset is split following~\cite{wang2023prototype}.
\emph{Teachers} and \emph{students} are optimized with Adam. Batch size is set as 4. Learning rates are initialized as $1e^{-3}$ which are gradually decayed by $1e^{-5}$ for both \emph{teachers} and \emph{students}.

\subsection{Comparison with state-of-the-art methods}
\label{Comparison}
Table~\ref{tab:enhancedBest} reports how our proposal could promote state-of-the-art (SOTA) approaches in guiding \emph{students} 
, including KD-Net~\cite{hu2020knowledge}, PMKL~\cite{chen2021learning}, ProtoKD~\cite{wang2023prototype} and SMU-Net~\cite{azad2022smu}. When three modalities (the most challenging setting) are missing,  the anchor $P^*_{mix}$, could lead to an improvement of 1.75 dice score on average for various SOTA \emph{students}. We also carry out experiments on other less difficult scenarios. These additional results can be referred to in Table~\ref{tab:otherMissing}.


\begin{table}[t]
\centering
\begin{minipage}{0.48\textwidth}
\centering
\tiny
 \caption{Comparison results of \emph{teacher} with different types of $P_{mix}$. Best results are highlighted in \textbf{bold}.}
\label{tab:teacherResult}
\begin{tabular}{ccccc}
\toprule
 $P_{mix}$           & WT    & TC    & EC    & Average \\
\midrule
Without $P_{mix}$                     &85.62  &\textbf{80.14}  & 75.01 & 80.26\\
$P^N_{mix}$                    &    84.14         & 77.44      & 74.82      &  79.13\\
$P^k_{mix}$                         &  86.34   & 79.75 & 76.91 & 81.00 \\
\rowcolor{mygray} $P^*_{mix}$              &  \textbf{86.81}& 79.22 & \textbf{77.85} & \textbf{81.29} \\                  
\midrule
\end{tabular}
\end{minipage}
\hfill 
\begin{minipage}{0.48\textwidth}
\centering
\tiny
\caption{Comparison results of \emph{teachers} on different single- modality. Columns represent target modalities. }
\label{tab:domainSelect}
\begin{tabular}{c|cccc|c}
\toprule
Modality & T1          & T2          & T1ce        & Flair       & Average   \\ \midrule
T1     & 58.92       & { 55.27} & 9.89        & 22.15       & 36.56  \\ 
T2     & 4.29        & 65.01       & 11.87       & 34.20       & 28.84  \\ 
T1ce   & { 38.19} & 11.70       & 76.76       & { 37.95} & \textbf{41.15}  \\ 
Flair  & 5.05        & 40.47       & { 40.17} & 62.14       & 36.96  \\ 
\bottomrule
\end{tabular}
\end{minipage}
\end{table}

\subsection{Find the \emph{teacher} with the best prior knowledge }
\label{findbestteacher}

Figure~\ref{fig:dist_vis} presents distributions of anchors among $P^N_{mix}$, $P^k_{mix}$, and $P^*_{mix}$. 
Consistent with the theoretical analysis presented in Sec.~\ref{approximations}, modality gaps are not narrowed at all.  As such, the space cannot be filled with $P^N_{mix}$ in (c), suggesting that $P_{mix}^{N}\triangleq \mathcal{N}(0, 1)$ cannot be placed as a fixed anchor to be aligned to.
On the contrary, as shown in (d) and (e) when anchors $P^k_{mix}$ and $P^*_{mix}$ are employed, distributed centers of each modality are almost overlapped, demonstrating that modality gaps are narrowed. 
Table~\ref{tab:teacherResult} statistically demonstrates that $P^*_{mix}$, as the best anchor for \emph{teacher},  generates an improvement by $1.03$ dice score. 
\subsection{Effectiveness verification of the \emph{student}}

\textbf{Proper $P^k_{mix}$ to be aligned.}
\label{preperanchor}
To design a pre-defined anchor $P^k_{mix}$, we train the \emph{teachers} with each single modality. As shown in Table~\ref{tab:domainSelect}, 
treating T1ce modality as $P^k_{mix}$ achieves the best dice score $41.15$ across all the testing sets.
This implies that T1ce encompasses the most comprehensive information, making it an ideal fixed anchor for enhancing feature invariant representation learning. 
Additionally, we also find that the learning parameters in $P^*_{mix}$ of T1, T2, T1ce and Flair are $0.7759$, $0.8977$, $2.2055$ and $0.3055$ respectively.
Apparently, T1ce enjoys the biggest weight, meaning that it is the most informative modality. 
Conversely, an unsuitable fixed anchor will significantly impair segmentation performance, as discussed in Table~\ref{tab:T2Result}.

\textbf{Options of anchors.}
Table~\ref{tab:enhancedBest} also shows that \emph{students} trained by \emph{teachers} with $P^*_{mix}$ performs better than \emph{teachers} with $P^k_{mix}$; in addition, \emph{teachers} with $P^N_{mix}$ is the worst one.
$P^*_{mix}$ brings an improvement on \emph{students} by 1.75 dice score on average while $P^k_{mix}$ brings 1.28.
Visualized demonstrations of aligning the latent representations to different anchors are displayed in Figure~\ref{fig:seg_visual}.
In Figure~\ref{fig:enter-label}, visualization results of the latent space for a specific sample are also shown.
As observed,  \emph{teachers} with $P^N_{mix}$ and $P^*_{mix}$ are more sensitive to small targets (yellow boxes). 
Moreover, \emph{teachers} with $P^k_{mix}$ and $P^*_{mix}$ are more likely to produce less false detection (white boxes). 
Overall, we can conclude that $P^*_{mix}$ is indeed one effective anchor.


\begin{figure}[t]
  \centering
  \includegraphics[width=\linewidth]{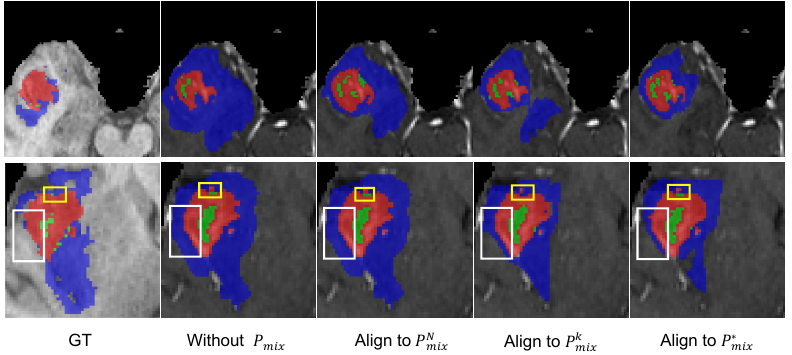}
   \caption{Segmentation visualization on~BraTS2018~\cite{bakas2017advancing,menze2014multimodal}.}
   \label{fig:seg_visual}
\end{figure}

\section{Conclusion}
In this paper, we present a novel alignment framework to narrow the modality gaps whilst learning simultaneously invariant feature representations in segmenting brain tumors with missing modalities. 
Specifically, we invent an alignment paradigm for the \emph{teacher} with latent space distribution $P^*_{mix}$ as the aligning anchor, thereby building the reliable prior knowledge to supervise training \emph{students}.
Meanwhile, we provide theoretical support for the proposed alignment paradigm, demonstrating that individually aligning each modality to $P_{mix}$ certifies a tighter evidence lower bound than mapping all modalities as a whole.
Extensive experiments have demonstrated the superiority of the proposed paradigm in several latest state-of-the-art approaches, enabling them to better transfer knowledge from the multi-modality \emph{teacher} to the \emph{student} with missing modalities.

\begin{figure}[]
    \centering
    \includegraphics[width=\linewidth]{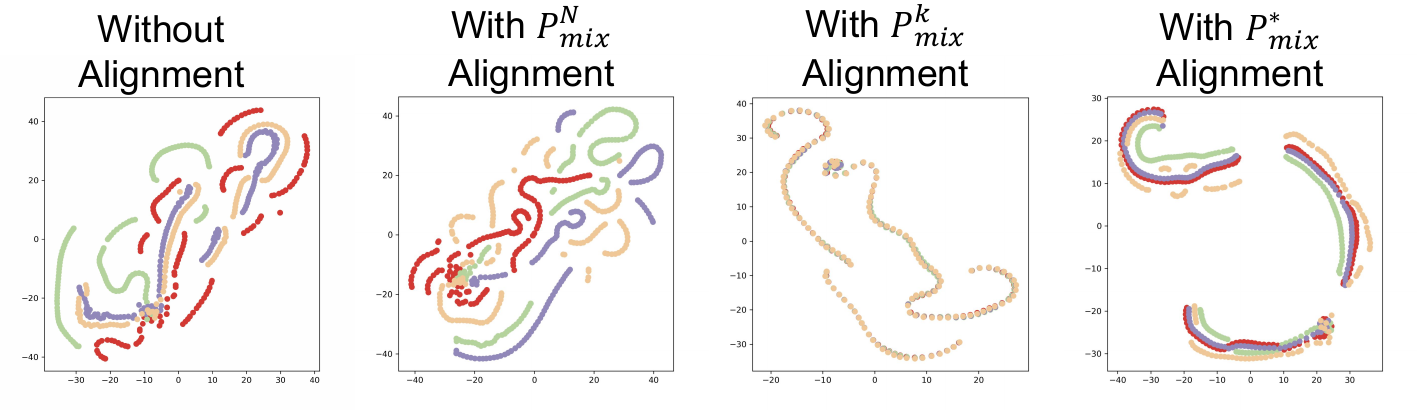}
    \caption{Latent space feature visualization of \textit{teacher}s with different anchor $P_{mix}$s for a specific sample. Different colors represent different modalities. Latent spaces aligned to $P^N_{mix}$ and without alignment have obvious modality gaps, while latent spaces aligned to $P^k_{mix}$ and $P^*_{mix}$ have narrow gaps.}
    \label{fig:enter-label}
\end{figure}



\clearpage
\bibliographystyle{IEEEtran}
\bibliography{ref}

\end{document}